\def\one{{\mathchoice {\mathrm{1\mskip-4mu l}} {\mathrm{1\mskip-4mu l}}
{\mathrm{1\mskip-4.5mu l}} {\mathrm{1\mskip-5mu l}}}}
\renewcommand{\H}{\mathcal{H}}
\newcommand{\B}{\mathcal{B}}
\newcommand{\C}{\mathbb{C}}
\newcommand{\I}{\mathcal{I}}
\newcommand{\R}{\mathcal{R}}
\renewcommand{\O}{\mathbb{O}}
\newcommand{\ot}{\otimes}
\newcommand{\dg}{\dagger}
\newcommand{\tr}{\mathrm{Tr}}
\newcommand{\ket}[1]{|#1\rangle}
\newcommand{\bra}[1]{\langle#1|}
\newcommand{\proj}[2]{|#1\rangle\langle#2|}
\newcommand{\inner}[2]{\langle#1|#2\rangle}
\newcommand{\m}[1]{\mathbb{M}_{#1}}
\newcommand{\ew}[1]{\mathcal{W}_{#1}}
\newtheorem{thm}{Theorem}
\newtheorem{lem}[thm]{Lemma}
\newtheorem{cor}{Corollary}
\newtheorem*{exmp}{Example}
\begin{document}

\title{Recurrent construction of optimal entanglement witnesses\\ for $2N$ qubit systems} 


\author{Justyna P. Zwolak}
\email[]{j.p.zwolak@gmail.com}
\affiliation{Department of Physics, Oregon State University,
\\
Corvallis, OR 97331, USA}
\author{Dariusz Chru\'{s}ci\'{n}ski}
\affiliation{Institute of Physics, Nicolaus Copernicus University,
\\
Grudzi\k{a}dzka 5/7, 87--100 Toru\'{n}, Poland}

\date{\today}

\begin{abstract}
We provide a  recurrent construction of entanglement witnesses for a bipartite systems living in a Hilbert space corresponding to $2N$ qubits ($N$ qubits in each subsystem). Our construction provides a new method of generalization of the Robertson map that naturally meshes with $2N$ qubit systems, i.e., its structure respects the $2^{2N}$ growth of the state space. We prove that for $N>1$ these witnesses are indecomposable and optimal. As a byproduct we provide a new family of PPT (Positive Partial Transpose) entangled states. 
\end{abstract}

\pacs{xxx}


\maketitle


\section{Introduction}

Entanglement witnesses (EW) provide universal tools for analyzing and detecting quantum entanglement \cite{Horodecki09-QE,Guhne09-ED}. Let us recall that a Hermitian operator $\ew{}$ defined on a tensor product $\H=\H_A \ot \H_B$ is called  an EW iff $\bra{\psi_A \ot \phi_B}\ew{}\ket{\psi_A \ot \phi_B } \geq 0$ and $\ew{}$ possesses at least one negative eigenvalue. It turns out that a state $\rho$ in $\H$ is entangled if and only if it is detected by some EW \cite{Horodecki96-SEP}, that is, iff there exists an EW $\ew{}$ such that $\mbox{Tr}(\ew{}\rho)<0$.  In recent years there was a considerable effort in
constructing and analyzing the structure of EWs (see e.g. \cite{Terhal02-DQE,Lewenstein00-OEW,Kraus02-CDS,Bruss02-CE,Toth05-DME,Bertlmann02-GPE,Breuer06-OED,Hall06-IOP,Chruscinski08-HEW,Pytel09-EW1,Pytel10-EW2,Pytel11-EW3,Zwolak13-EW4,Ha04-CES,Ha05-EES,Ha11-IEW,Spengler12-DMB}). However, the general construction of an EW is not known. Let us recall that an entanglement witness $\ew{}$ is decomposable if
\begin{equation}\label{DEC}
  \ew{} = A + B^\Gamma \ ,
\end{equation}
where $A,B \geq 0$ and $B^\Gamma$ denotes a partial transposition of $B$.  EWs that can not be represented as (\ref{DEC}) are called indecomposable. Indecomposable EWs  are necessary to detect PPT entangled states (a state $\rho$ is PPT if $\rho^\Gamma \geq 0$). If $\rho$ is PPT, $\ew{}$ is an EW and ${\rm Tr}(\ew{}\rho) < 0$, then $\rho$ is entangled and $\ew{}$ is necessarily indecomposable. The optimal EW is defined as follows: if $\ew{1}$ and $\ew{2}$ are two entanglement witnesses then following Ref. \cite{Lewenstein00-OEW} we call $\ew{1}$ finer than $\ew{2}$ if $D_{\ew{1}} \supseteq  D_{\ew{2}}$, where
\begin{equation*}\label{}
  D_{\ew{}} = \{ \, \rho\, |\, {\rm Tr}(\rho \ew{}) < 0 \, \}\ 
\end{equation*}
denotes the set of all entangled states detected by $\ew{}$. Now,  an EW $\ew{}$ is optimal if there is no other witness that is finer than $\ew{}$. One proves \cite{Lewenstein00-OEW} that $\ew{}$ is optimal iff for any $\alpha > 0$ and a positive operator $P$ an operator $\ew{} - \alpha P$ is no longer an EW. Authors of \cite{Lewenstein00-OEW} provided the following sufficient condition of optimality: for a given EW $\ew{}$ one defines
\begin{equation}\label{eq:opty}
  P_{\ew{}} = \{\, \ket{\psi \ot \phi} \in \H_A \ot \H_B\, |\, \bra{\psi \ot \phi}\ew{}\ket{\psi \ot \phi} = 0 \, \} \ .
\end{equation}
If $P_{\ew{}}$ spans $\H_A \ot \H_B$, then $\ew{}$ is optimal.

Using well known duality between bi-partite operators in $\H_A \ot \H_B$ and linear maps $\Lambda : \B(\H_A) \rightarrow\B(\H_B)$ one associates with a given EW $\ew{}$ a linear positive map by $\Lambda_{\ew{}}$ such that  $ \ew{} = (\I \ot \Lambda_{\ew{}})P^+_{A}$, where $P^+_{A}$ denotes maximally entangled state in $\H_A \ot \H_A$, and $\I$ denotes an identity map. Due to the fact that $\ew{} \ngeq 0$ the corresponding map $\Lambda_{\ew{}}$ is not completely positive (CP). 

In the present paper we provide a recurrent construction a family of positive maps  $\Psi_{N} : \m{2}^{\ot N} \rightarrow \m{2}^{\ot N}$ for $N\geq 1$. Equivalently, we define a family of EWs $\ew{N}$ in $\mathbb{C}^{2\ot N} \ot  \mathbb{C}^{2\ot N}$. Interestingly, $\Psi_1$  reproduces well known  reduction map and for $N=2$ our construction reproduces the Robertson map \cite{Robertson85-RM}. However, for $N \geq 3$ it provides brand new positive maps (equivalently EWs). Moreover, we show that for $N>1 $ these EWs are indecomposable and optimal and hence may be used to detect PPT entangled states. Finally, we show that so called structural physical approximation to $\ew{N}$ is a separable state \cite{Korbicz08-SPA}. As a byproduct we provide PPT entangled states detected by our witnesses.

\section{Recurrent construction}

In what follows we provide a recurrent construction of linear positive maps
\begin{equation*}\label{}
\Psi_{N} : \m{2}^{\ot N} \longrightarrow \,\m{2}^{\ot N},
\end{equation*}
where $\m{2}^{\ot N}$ denotes a tensor product of $N$ copies of $\m{2}$ (a space of $2\times 2$ complex matrices). Let us start with a ``vacuum'' map $\Psi_0 : \C \rightarrow \C$ defined by $\Psi_0(z) = 0$ which is evidently positive but not very interesting. Out of $\Psi_0$ we construct a family of nontrivial positive maps {\em via} the following formula
\begin{equation}\label{eq:Psi_N}
    \Psi_{N+1}\left(\begin{array}{c|c} X_{11} & X_{12} \\\hline  X_{21} &  X_{22} \end{array}\right)  = \frac{1}{2^N} \left( \begin{array}{c|c} D_{11} & -A_{N} \\ \hline -B_{N} & D_{22} \end{array} \right)
\end{equation}
with the diagonal blocks defined as
\begin{equation*}
D_{ii}= \one_{2}^{\ot N}(\tr \,X - \tr\, X_{ii})
\end{equation*}
and the off-diagonal blocks given recursively by
\begin{eqnarray*}\label{}
    A_{N} &=& X_{12} + \Psi_{N}(X_{21}) , \\
    B_{N} &=& X_{21} + \Psi_{N}(X_{12}) .
\end{eqnarray*}
In Eq. (\ref{eq:Psi_N}) one uses $\m{2}^{\ot (N+1)} = \,\m{2} \ot \m{2}^{\ot N}$ and hence we can rewrite $X= \sum_{i,j=1}^2 e_{ij} \ot X_{ij}$, with $X_{ij} \in \m{2}^{\ot N}$ and $e_{ij} = \proj{i}{j}$.  It is clear from the construction that each $\Psi_{N}$ is trace-preserving and unital, i.e. $\Psi_{N}(\one_{2}^{\ot N}) = \one_{2}^{\ot N}$.

Interestingly, one finds $\Psi_1 : \m{2} \rightarrow \m{2}$ to be
\begin{equation*}\label{}
\Psi_1\left(\begin{array}{cc} x_{11} & x_{12} \\ x_{21} & x_{22} \end{array}\right)  = \left( \begin{array}{cc} x_{22} & -x_{12} \\ -x_{21} & x_{11} \end{array} \right),
\end{equation*}
which reconstructs the reduction map in $\m{2}$, i.e.,
\begin{equation*}\label{}
    \Psi_1(X)\equiv\R(X) = \one_2\tr X - X\ .
\end{equation*}
This map is known to be positive, decomposable and optimal (even extremal) \cite{Pytel11-EW3}. Similarly one can reproduce the Robertson map:
\begin{equation*}\label{}
    \Psi_2\left( \begin{array}{c|c} X_{11} & X_{12} \\\hline  X_{21} &  X_{22} \end{array}\right)  = \frac{1}{2} \left( \begin{array}{c|c} \one_2\tr X_{22} & - A_1 \\ \hline- B_1 & \one_2 \tr X_{11} \end{array} \right)
\end{equation*}
with
\begin{eqnarray*}\label{}
    A_{1} &=& X_{12} + \R(X_{21}) , \\
    B_{1} &=& X_{21} + \R(X_{12}) ,
\end{eqnarray*}
which is known to be positive, indecomposable and extremal \cite{Pytel09-EW1}. Recently, this map has been generalized to higher dimensional bipartite systems in several ways \cite{Pytel09-EW1,Pytel10-EW2,Pytel11-EW3,Zwolak13-EW4}. In all cases these generalizations lead to families of indecomposable and optimal maps.

\section{Properties of $\Psi_N$}

In this section we analyze the basic properties of the family of maps $\Psi_N$.
We already noted that $\Psi_{N}$ is positive for $N=0,1$ and $2$ (actually, the ``vacuum'' map $\Psi_0$ is even CP). The crucial result of this paper consists in the following
\begin{thm}\label{thm:positivity}
The map $\Psi_N$ is positive for any $N$.
\end{thm}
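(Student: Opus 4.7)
The plan is to proceed by induction on $N$, with base case $N=1$ where $\Psi_1$ is the reduction map $\R$ and positivity is classical. In the inductive step we assume $\Psi_N$ is positive (and, as a bookkeeping consequence of the recursion, trace-preserving, unital and hermiticity-preserving) and derive the positivity of $\Psi_{N+1}$. Since $\Psi_{N+1}$ is linear and every positive semidefinite matrix on $\C^{2^{N+1}}$ is a convex combination of rank-one projectors, it is enough to verify $\Psi_{N+1}(|\psi\rangle\langle\psi|)\geq 0$ for every $|\psi\rangle\in\C^{2^{N+1}}$.

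Write $|\psi\rangle=(|a\rangle,|b\rangle)^T$ with $|a\rangle,|b\rangle\in\C^{2^N}$. Substituting $X_{ij}=|v_i\rangle\langle v_j|$ (with $v_1=a$, $v_2=b$) into (\ref{eq:Psi_N}) and using hermiticity preservation gives the block form
\begin{equation*}
\Psi_{N+1}(|\psi\rangle\langle\psi|)=\frac{1}{2^N}\begin{pmatrix} \|b\|^2\,\one & -C\\ -C^\dagger & \|a\|^2\,\one\end{pmatrix},
\end{equation*}
with $C=|a\rangle\langle b|+\Psi_N(|b\rangle\langle a|)$. The diagonal blocks are manifestly positive semidefinite, so by the Schur complement criterion (the degenerate cases $\|a\|=0$ or $\|b\|=0$ being immediate), positivity reduces to the operator inequality $CC^\dagger\leq \|a\|^2\|b\|^2\,\one$, equivalently $\|C\|_\infty\leq\|a\|\|b\|$.

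The main obstacle is proving this sharp bound on $\|C\|_\infty$. Each summand of $C$ has operator norm at most $\|a\|\|b\|$: the first by inspection, the second by the Russo--Dye theorem applied to the unital positive map $\Psi_N$. But the triangle inequality only delivers $\|C\|_\infty\leq 2\|a\|\|b\|$, off by a factor of two, so the sharpening must exploit a built-in cancellation between the two summands. I would accordingly strengthen the inductive statement to include the auxiliary ``$C$-bound''
\begin{equation*}
\|\,|\xi\rangle\langle\eta|+\Psi_N(|\eta\rangle\langle\xi|)\,\|_\infty\leq \|\xi\|\|\eta\|\quad\text{for all }\xi,\eta\in\C^{2^N}.
\end{equation*}
At $N=1$ the bound holds with equality: a direct computation using $\R(|\eta\rangle\langle\xi|)=\langle\xi|\eta\rangle\,\one-|\eta\rangle\langle\xi|$ together with the two-dimensionality of $\C^2$ yields $CC^\dagger=\|\xi\|^2\|\eta\|^2\,\one$. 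The step $N\rightarrow N+1$ then proceeds by block-decomposing $\xi,\eta\in\C^{2^{N+1}}$, substituting the recursion into $|\xi\rangle\langle\eta|+\Psi_{N+1}(|\eta\rangle\langle\xi|)$, and bounding the resulting $2\times 2$ block by a second application of the Schur complement, this time fed by the $C$-bound for $\Psi_N$. I expect the careful bookkeeping required to propagate the $C$-bound across one layer of the recursion to be the technical heart of the argument; positivity of $\Psi_{N+1}$ is then an immediate consequence.
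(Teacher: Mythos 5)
Your reduction to rank-one projectors, the block form of $\Psi_{N+1}(\proj{\psi}{\psi})$, the Schur-complement reduction to $CC^{\dagger}\leq\|a\|^{2}\|b\|^{2}\,\one$, and the observation that the triangle inequality plus Russo--Dye gives only $\|C\|\leq 2\|a\|\|b\|$ all match the paper's proof. The gap is in how you propose to recover the lost factor of $2$. You strengthen the induction hypothesis to the ``$C$-bound'' for $\Psi_N$ and assert that the step $N\to N+1$ follows by block-decomposing $\proj{\xi}{\eta}+\Psi_{N+1}(\proj{\eta}{\xi})$ and feeding the blocks back into that hypothesis. But if you actually write out this block matrix using Eq.~(\ref{eq:Psi_N}), its $(1,2)$ block is $\proj{\xi_1}{\eta_2}-\frac{1}{2^{N}}\proj{\eta_1}{\xi_2}-\frac{1}{2^{N}}\Psi_{N}(\proj{\eta_2}{\xi_1})$: the identity part and the recursive part carry mismatched coefficients ($1$ versus $-\frac{1}{2^{N}}$), there is an extra rank-one term with the roles of $\xi$ and $\eta$ swapped, and the diagonal blocks are rank-one perturbations of scalars rather than scalars. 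Nothing here has the shape $\proj{u}{v}+\Psi_{N}(\proj{v}{u})$ to which your hypothesis applies, so the proposed second Schur-complement step has no input. The ``careful bookkeeping'' you defer is exactly where the proof lives, and as set up it does not close.

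The paper closes this gap with a different auxiliary statement, its Lemma~\ref{lem:prop}: $\Psi_{N}(\proj{x}{y})\ket{x}=0$ and $\bra{y}\Psi_{N}(\proj{x}{y})=0$ for all $x,y\in\C^{2^{N}}$, proved by its own short induction through the recursion. This is precisely the built-in cancellation you correctly suspected must exist, but in a form that propagates cleanly: it kills the cross terms in $CC^{\dagger}$ identically, leaving $CC^{\dagger}=\|b\|^{2}\proj{a}{a}+\Psi_{N}(\proj{b}{a})\Psi_{N}(\proj{a}{b})$, where the two positive summands are supported on $\C\ket{a}$ and on its orthogonal complement respectively. Each summand is then bounded by $\|a\|^{2}\|b\|^{2}$ using exactly the contractivity estimate you already invoke, and no factor of $2$ is lost because the two norms are achieved on orthogonal subspaces. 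To salvage your outline, prove this kernel lemma instead of trying to push the norm bound itself through the recursion.
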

\begin{proof} See the Appendix. 
\end{proof}
Note that for $N\ge1$ the map $\Psi_N$ is not CP. Indeed, the corresponding EW
 $\ew{N}=(\one_{N} \ot\Psi_N)P^+$  possesses exactly one negative eigenvalue
\begin{equation*}
\ew{N}\phi^{+}=-\frac{1}{2^{N}}\phi^{+},
\end{equation*}
where $\phi^{+}=\sum_{i=1}^{2^N}e_{i}\ot e_{i}$ denotes the (unnormalized) maximally entangled state. The existence of a negative eigenvalue of $\ew{N}$ proves that $\Psi_N$ is not CP and hence $\ew{N}$ is a legitimate entanglement witness.

\begin{figure}[t]
\includegraphics[width=1\columnwidth]{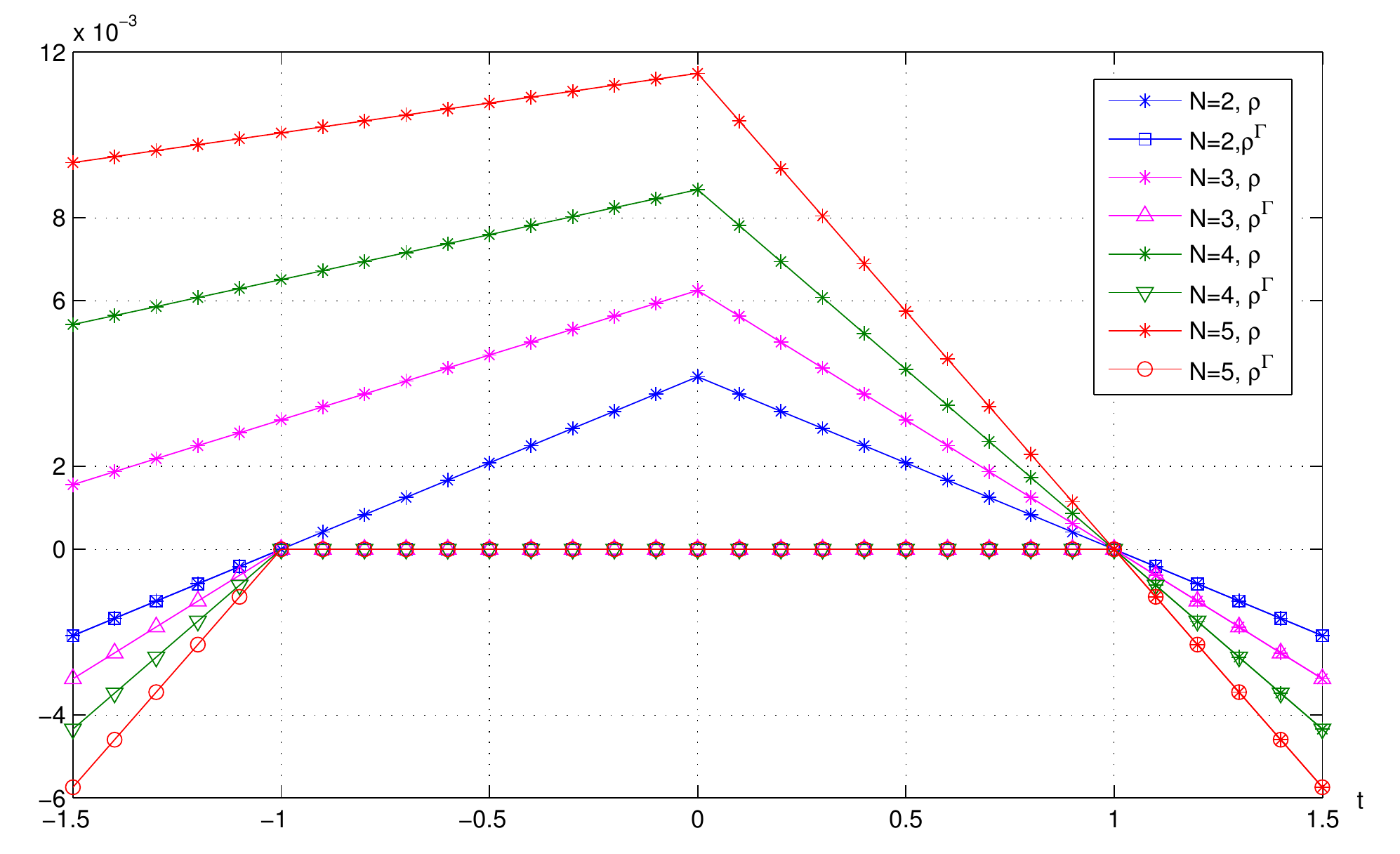}\caption{\label{fig:eig_of_rho} Smallest eigenvalues of the matrix $\rho_{t}$ defined by Eq.\,(\ref{eq:rho}) and $\rho_{t}^{\Gamma}$ as a function of the parameter $t\in[-1.5;1.5]$ for three different $N$. In the case of $N=2,4,5$ eigenvalues are scaled so that everything can be shown on one plot. It does not affect the positivity of eigenvalues.}
\end{figure}
We already noticed that $\Psi_1$, corresponding to the reduction map, is decomposable while $\Psi_2$, corresponding to the Robertson map, is indecomposable. One has the following theorem,
\begin{thm}
The map $\Psi_N$ is indecomposable for $N > 1$.
\end{thm}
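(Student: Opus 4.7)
The plan is to establish indecomposability by producing, for each $N>1$, a PPT state that the witness $\ew{N}=(\I\otimes\Psi_N)P^{+}$ detects. Because every decomposable witness can be written as $A+B^\Gamma$ with $A,B\geq 0$, one has $\mathrm{Tr}(\ew{}\,\rho)\geq 0$ for all PPT $\rho$; hence a single PPT state $\rho$ with $\mathrm{Tr}(\ew{N}\rho)<0$ rules out any such decomposition and proves the theorem.

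The natural candidate is precisely the one-parameter family $\rho_t$ whose smallest eigenvalues are plotted in Fig.~\ref{fig:eig_of_rho}. Guided by the recursive block structure of Eq.~(\ref{eq:Psi_N}) and by the analogous construction in the Robertson case $N=2$, I would take $\rho_t$ to be an $\mathrm{SU}(2)^{\otimes N}$-type symmetric mixture of the maximally mixed state, the projector onto $\ket{\phi^{+}}$, and an off-diagonal ``flip-like'' component weighted by $t$, scaled so that $\rho_t$ has unit trace. The reason this family is a good one is that $\ket{\phi^{+}}$ is the unique negative eigenvector of $\ew{N}$ with eigenvalue $-2^{-N}$ (already established in the text after Theorem~\ref{thm:positivity}), so the trace $\mathrm{Tr}(\ew{N}\rho_t)$ will pick up a negative contribution proportional to the $\ket{\phi^{+}}\!\bra{\phi^{+}}$ weight, which can be tuned by $t$.

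The argument would then proceed in four steps. First, use Eq.~(\ref{eq:Psi_N}) and the recursion for $A_N,B_N$ to write $\ew{N}$ explicitly as a $2\times 2$ block matrix with entries in $\m{2}^{\otimes(2N-1)}$, and evaluate $\mathrm{Tr}(\ew{N}\rho_t)$ as a linear function of $t$; this is the routine bookkeeping part. Second, determine the interval $t\in[t_{-},t_{+}]$ on which $\rho_t\geq 0$; by the symmetric structure this reduces to checking a few block-eigenvalues. Third, determine the (smaller) interval on which $\rho_t^{\Gamma}\geq 0$; this is where the real work lies. Fourth, exhibit a value $t^{\star}$ in the PPT interval for which $\mathrm{Tr}(\ew{N}\rho_{t^{\star}})<0$, and invoke the Horodecki criterion to conclude indecomposability.

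The main obstacle is the third step: showing that the PPT interval is nonempty and explicitly wide enough to overlap with $\{t : \mathrm{Tr}(\ew{N}\rho_t)<0\}$ uniformly in $N$. I would try to handle this by induction on $N$, exploiting the fact that partial transposition and the recursion~(\ref{eq:Psi_N}) interact in a controllable way: the diagonal blocks $D_{ii}$ are multiples of the identity (so invariant under $\Gamma$), while the off-diagonal blocks $-A_N,-B_N$ are built from $X_{12},X_{21}$ and $\Psi_N(X_{21}),\Psi_N(X_{12})$, which upon partial transposition turn into expressions to which the inductive hypothesis for $\Psi_{N-1}$ can be applied. In the worst case, one can always fall back on a direct spectral analysis of $\rho_t^{\Gamma}$ for the specific symmetric ansatz, as Fig.~\ref{fig:eig_of_rho} indicates that the relevant eigenvalue bounds depend on $N$ in a controlled, monotone way.
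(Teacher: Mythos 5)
Your overall strategy is exactly the paper's: exhibit, for each $N>1$, a PPT state $\rho_t$ with $\tr(\ew{N}\,\rho_t)<0$, which is incompatible with any decomposition $\ew{N}=A+B^\Gamma$ with $A,B\geq 0$. That reduction is correct and is all the general theory the proof needs.

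The difficulty is that everything after that is a plan rather than a proof, and the one ingredient that actually carries the argument --- the explicit state --- is never produced. Your ansatz (a symmetric mixture of the maximally mixed state, $\proj{\phi^+}{\phi^+}$, and a ``flip-like'' term weighted by $t$) is not the paper's construction, and it is doubtful it can work as described: mixtures of $\one/d^2$ and $P^+$ form the isotropic family, where PPT and separability coincide, so no witness detects a PPT member; in particular the heuristic that the negative eigenvector $\ket{\phi^+}$ lets you ``tune'' the trace negative is misleading, because increasing the $P^+$ weight destroys the PPT property before anything interesting happens. The paper instead builds $\rho_t$ directly out of the witness's own matrix elements: with $W_{ij}=\frac{1}{2^N}\Psi_N(e_{ij})$ it takes diagonal blocks $\frac{1}{2^N}\one_{2^N}-(2^{N-1}-1)W_{ii}$, anti-diagonal blocks $-t\,W_{i,i+2^{N-1}}$, small multiples of $e_{ij}$ on certain remaining blocks, and zero elsewhere (Eq.~(\ref{eq:rho})). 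This alignment of the state with the witness is what makes $\tr(\ew{N}\rho_t)$ an explicitly computable linear function of $t$, namely $(-4t(2^N+4)+2^{N+2})/2^{4N}$, which is negative for $t>2^N/(2^N+4)$, while $\rho_t$ and $\rho_t^\Gamma$ remain positive for $|t|\le 1$. You correctly identify the PPT verification as the crux and leave it as ``the real work''; to be fair, the paper does not fully discharge that step analytically either (it appeals to the numerically computed minimal eigenvalues in Fig.~\ref{fig:eig_of_rho}), but it at least reduces the claim to a concrete spectral statement about a concrete one-parameter family. As written, your proposal establishes neither the existence of the PPT window nor the negativity of the trace on it, so the theorem is not yet proved.
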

\begin{proof}
To prove indecomposability of $\Psi_N$ it is enough to find a PPT state $\rho$ such that $\tr(\ew{N}\,\rho)<0$. Let us consider the following construction of a family of (unnormalized) matrices parametrized by $t\in\mathbb{R}$:
\begin{equation}\label{eq:rho}
\rho_{t}=\sum_{i,j=1}^{2^{N}}e_{ij}\ot\rho_{ij},
\end{equation}
with the $2^{N}\times2^{N}$ blocks $\rho_{ij}$  defined as follows: 
\begin{itemize}
\item $\rho_{ii}=\frac{1}{2^{N}}\one_{2^{N}}-(2^{N-1}-1)W_{ii}$\quad for $i=1,\dots,2^{N}$,
\item $\rho_{ij}=\O_{2^{N}}$, if $i\neq j$ and $i,j<2^{N-1}$ or $i,j>2^{N-1}$,
\item $\rho_{i,i+2^{N-1}}=-t\cdot W_{i,i+2^{N-1}}$,
\item $ \rho_{ij}=\frac{1}{2^{N}\cdot2^{N-1}}e_{ij}$ in the remaining cases
\end{itemize}
and $W_{ij}=\frac{1}{2^N}\Psi_N(e_{ij})$. 
Figure \ref{fig:eig_of_rho} shows how the minimal eigenvalue of a state $\rho_{t}$ and the minimal eigenvalue of the partially transposed  state $\rho_{t}^{\Gamma}$  depends on the parameter $t$. The smallest eigenvalue of $\rho_{t}^{\Gamma}$ becomes strictly negative for $t<-1$ and $t>1$.  Thus $\rho_t$ is PPT if and only if $|t|\leq 1$. This statement is true for all $N>1$.

One shows that for any $N$ the expectation value of $\ew{N}$ in the state $\rho_{t}$ is given by
\begin{equation*}\label{}
\tr(\ew{N}\rho_{t})=\frac{-4t\,(2^{N}+4)+2^{N+2}}{2^{4N}}
\end{equation*}
and hence $\rho_{t}$ is entangled for $t\in(\frac{2^{N}}{2^{N}+4},1]$.
The analysis of the few first cases is shown in Figure \ref{fig:tr(Wrho)}.
\end{proof}
As a byproduct we derive a new one-parameter class of PPT entangled states in $\mathbb{C}^{2N} \ot \mathbb{C}^{2N}$.
\begin{exmp}
One finds the following matrix representation (up to an unimportant positive constant) of $\ew{2}$
\begin{equation*}\label{}
\small\left(\begin{array}{cccc|cccc|cccc|cccc}
\cdot & \cdot & \cdot & \cdot & \cdot & \cdot & \cdot & \cdot & \cdot & \cdot & -1 & \cdot & \cdot & \cdot & \cdot & -1\\
\cdot & \cdot & \cdot & \cdot & \cdot & \cdot & \cdot & \cdot & \cdot & \cdot & \cdot & \cdot & \cdot & \cdot & \cdot & \cdot\\
\cdot & \cdot & 1 & \cdot & \cdot & \cdot & \cdot & \cdot & \cdot & \cdot & \cdot & \cdot & \cdot & 1 & \cdot & \cdot\\
\cdot & \cdot & \cdot & 1 & \cdot & \cdot & \cdot & \cdot & \cdot & -1 & \cdot & \cdot & \cdot & \cdot & \cdot & \cdot\\ \hline
\cdot & \cdot & \cdot & \cdot & \cdot & \cdot & \cdot & \cdot & \cdot & \cdot & \cdot & \cdot & \cdot & \cdot & \cdot & \cdot\\
\cdot & \cdot & \cdot & \cdot & \cdot & \cdot & \cdot & \cdot & \cdot & \cdot & -1 & \cdot & \cdot & \cdot & \cdot & -1\\
\cdot & \cdot & \cdot & \cdot & \cdot & \cdot & 1 & \cdot & \cdot & \cdot & \cdot & \cdot & -1 & \cdot & \cdot & \cdot\\
\cdot & \cdot & \cdot & \cdot & \cdot & \cdot & \cdot & 1 & 1 & \cdot & \cdot & \cdot & \cdot & \cdot & \cdot & \cdot\\ \hline
\cdot & \cdot & \cdot & \cdot & \cdot & \cdot & \cdot & 1 & 1 & \cdot & \cdot & \cdot & \cdot & \cdot & \cdot & \cdot\\
\cdot & \cdot & \cdot & -1 & \cdot & \cdot & \cdot & \cdot & \cdot & 1 & \cdot & \cdot & \cdot & \cdot & \cdot & \cdot\\
-1 & \cdot & \cdot & \cdot & \cdot & -1 & \cdot & \cdot & \cdot & \cdot & \cdot & \cdot & \cdot & \cdot & \cdot & \cdot\\
\cdot & \cdot & \cdot & \cdot & \cdot & \cdot & \cdot & \cdot & \cdot & \cdot & \cdot & \cdot & \cdot & \cdot & \cdot & \cdot\\ \hline
\cdot & \cdot & \cdot & \cdot & \cdot & \cdot & -1 & \cdot & \cdot & \cdot & \cdot & \cdot & 1 & \cdot & \cdot & \cdot\\
\cdot & \cdot & 1 & \cdot & \cdot & \cdot & \cdot & \cdot & \cdot & \cdot & \cdot & \cdot & \cdot & 1 & \cdot & \cdot\\
\cdot & \cdot & \cdot & \cdot & \cdot & \cdot & \cdot & \cdot & \cdot & \cdot & \cdot & \cdot & \cdot & \cdot & \cdot & \cdot\\
-1 & \cdot & \cdot & \cdot & \cdot & -1 & \cdot & \cdot & \cdot & \cdot & \cdot & \cdot & \cdot & \cdot & \cdot & \cdot
\end{array}\right)
\end{equation*}
and the (unnormalized) matrix $\rho_t$
\begin{equation*}\label{}
\left(\begin{array}{cccc|cccc|cccc|cccc}
2 & \cdot & \cdot & \cdot & \cdot & \cdot & \cdot & \cdot & \cdot & \cdot & t & \cdot & \cdot & \cdot & \cdot & 1\\
\cdot & 2 & \cdot & \cdot & \cdot & \cdot & \cdot & \cdot & \cdot & \cdot & \cdot & \cdot & \cdot & \cdot & \cdot & \cdot\\
\cdot & \cdot & 1 & \cdot & \cdot & \cdot & \cdot & \cdot & \cdot & \cdot & \cdot & \cdot & \cdot & \cdot & \cdot & \cdot\\
\cdot & \cdot & \cdot & 1 & \cdot & \cdot & \cdot & \cdot & \cdot & t & \cdot & \cdot & \cdot & \cdot & \cdot & \cdot\\
\hline \cdot & \cdot & \cdot & \cdot & 2 & \cdot & \cdot & \cdot & \cdot & \cdot & \cdot & \cdot & \cdot & \cdot & \cdot & \cdot\\
\cdot & \cdot & \cdot & \cdot & \cdot & 2 & \cdot & \cdot & \cdot & \cdot & 1 & \cdot & \cdot & \cdot & \cdot & t\\
\cdot & \cdot & \cdot & \cdot & \cdot & \cdot & 1 & \cdot & \cdot & \cdot & \cdot & \cdot & t & \cdot & \cdot & \cdot\\
\cdot & \cdot & \cdot & \cdot & \cdot & \cdot & \cdot & 1 & \cdot & \cdot & \cdot & \cdot & \cdot & \cdot & \cdot & \cdot\\
\hline \cdot & \cdot & \cdot & \cdot & \cdot & \cdot & \cdot & \cdot & 1 & \cdot & \cdot & \cdot & \cdot & \cdot & \cdot & \cdot\\
\cdot & \cdot & \cdot & t & \cdot & \cdot & \cdot & \cdot & \cdot & 1 & \cdot & \cdot & \cdot & \cdot & \cdot & \cdot\\
t & \cdot & \cdot & \cdot & \cdot & 1 & \cdot & \cdot & \cdot & \cdot & 2 & \cdot & \cdot & \cdot & \cdot & \cdot\\
\cdot & \cdot & \cdot & \cdot & \cdot & \cdot & \cdot & \cdot & \cdot & \cdot & \cdot & 2 & \cdot & \cdot & \cdot & \cdot\\
\hline \cdot & \cdot & \cdot & \cdot & \cdot & \cdot & t & \cdot & \cdot & \cdot & \cdot & \cdot & 1 & \cdot & \cdot & \cdot\\
\cdot & \cdot & \cdot & \cdot & \cdot & \cdot & \cdot & \cdot & \cdot & \cdot & \cdot & \cdot & \cdot & 1 & \cdot & \cdot\\
\cdot & \cdot & \cdot & \cdot & \cdot & \cdot & \cdot & \cdot & \cdot & \cdot & \cdot & \cdot & \cdot & \cdot & 2 & \cdot\\
1 & \cdot & \cdot & \cdot & \cdot & t & \cdot & \cdot & \cdot & \cdot & \cdot & \cdot & \cdot & \cdot & \cdot & 2
\end{array}\right)
\end{equation*}
where $2^N \times 2^N$ blocks are separated by horizontal and vertical lines. Moreover, to make the picture more transparent we denote zeros by dots.
One finds
\begin{equation*}
\tr(\ew{2}\rho_{t}) = (4-8t)/8
\end{equation*} 
which shows that $\rho_{t}$ is entangled for $t > 1/2$.
\end{exmp}
\begin{figure}[t]
\includegraphics[width=1\columnwidth]{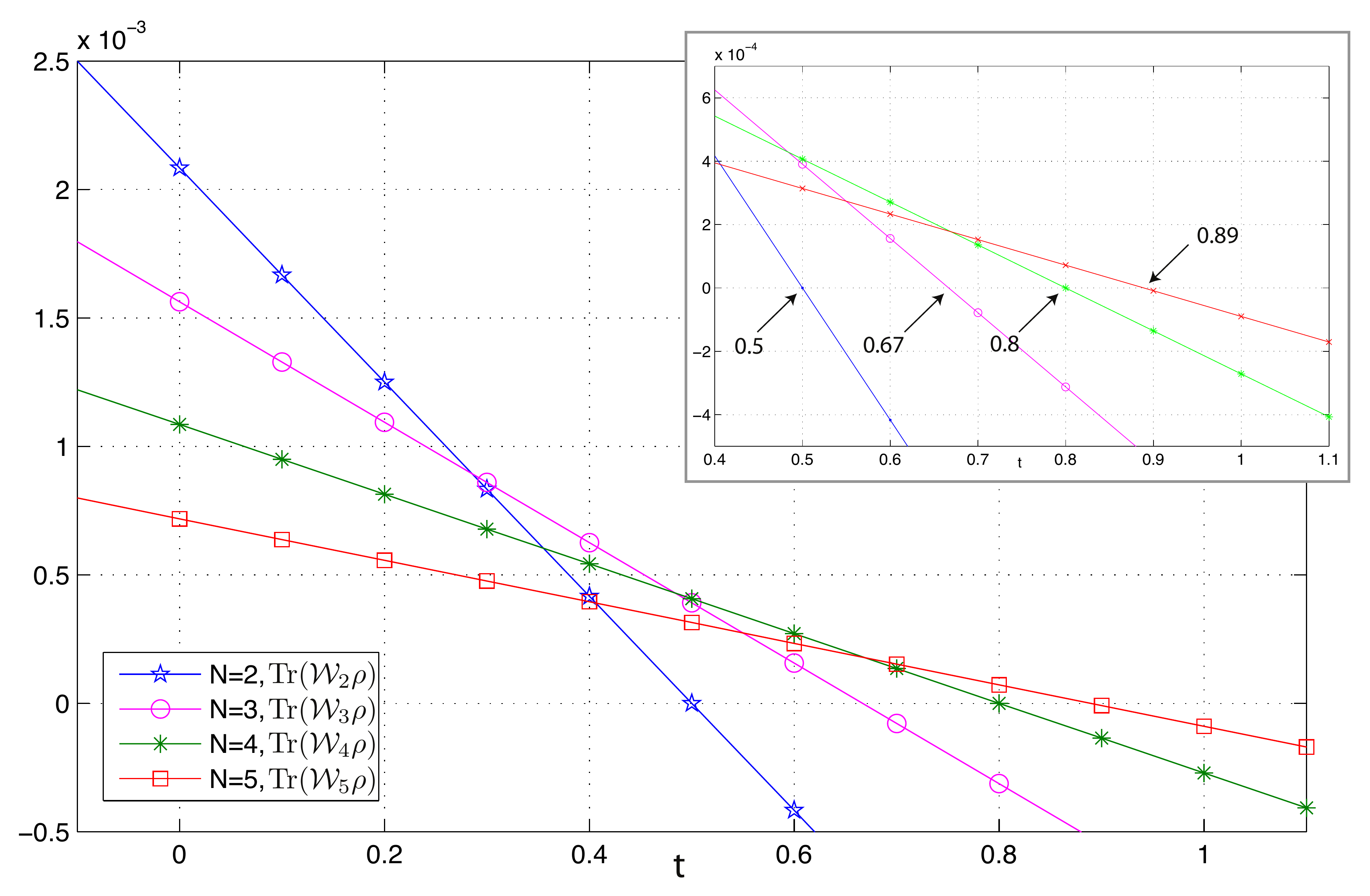}\caption{\label{fig:tr(Wrho)} The expectation value of $\ew{N}$ in a state
$\rho_{t}$ for three different values of $N$.}
\end{figure}

One can observe that with the increase of the number of qubits, the range of $t$ gets smaller. The decreasing range of $t$ can be intuitively ascribed to the fact that having more qubits in our system leads to spreading out the same ``amount'' of entanglement between more particles. As a consequence, our witness $\ew{N}$ might become not strong enough to detect it. In order to fully understand how the entanglement is being distributed in states $\rho_t$ and $\rho_t^{\Gamma}$ further and more detailed analysis is necessary.

Following Ref. \cite{Lewenstein00-OEW} to prove that $\Psi_{N}$ are optimal for $N>1$ it is enough to find for each $N$  a set of linearly independent product vectors $\psi_{i}\ot\phi_{i}\in\C^{2\ot N} \ot \C^{2\ot N}$ satisfying Eq. (\ref{eq:opty}). Let us consider a set of vectors introduced in Ref. \cite{Pytel09-EW1}:
\begin{equation*}
\mathcal{G}_{\ew{}}:=\{\psi_{\alpha}\ot\psi_{\alpha}^{*},\alpha=1,\dots,2^{2N}\}
\end{equation*}
with $\psi_{\alpha}\in\{e_{l}\,,f_{mn}\,,g_{mn}\}$, where $\{e_{i}\}$ stands for an orthonormal basis and
\begin{eqnarray*}
f_{mn} & = & e_{m}+e_{n},\\
g_{mn} & = & e_{m}+\imath e_{n} ,
\end{eqnarray*}
for $1\leq m<n\leq N$. Direct calculations show that elements of $\mathcal{G}_{\ew{}}$
are linearly independent and that
\begin{equation*}
\forall_{\alpha=1,\dots,N}\quad\inner{\psi_{\alpha}\ot\psi_{\alpha}^{*}|\ew{N}}{\psi_{\alpha}\ot\psi_{\alpha}^{*}}=0,
\end{equation*}
which is sufficient to prove the the following theorem:
\begin{thm}
For all $N\ge1$, $\Psi_N$ defines a class of optimal maps.
\end{thm}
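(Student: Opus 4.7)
The plan is to invoke the Lewenstein--Kraus--Cirac--Horodecki sufficient condition recalled in Eq.~(\ref{eq:opty}): $\ew{N}$ is optimal as soon as the set $P_{\ew{N}}$ of zero-expectation product vectors linearly spans $\C^{2^N}\ot\C^{2^N}$. Since $\dim\,\C^{2^N}\ot\C^{2^N}=2^{2N}$, it suffices to exhibit $2^{2N}$ linearly independent product vectors $\psi\ot\phi$ with $\bra{\psi\ot\phi}\ew{N}\ket{\psi\ot\phi}=0$, and $\mathcal{G}_{\ew{}}$ is the natural candidate.

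First I would pass to the map picture. The duality $\ew{N}=(\I\ot\Psi_N)P^+$ gives, after a direct computation,
\begin{equation*}
\bra{\psi\ot\phi}\ew{N}\ket{\psi\ot\phi}=\bra{\phi}\Psi_N(\proj{\psi^*}{\psi^*})\ket{\phi},
\end{equation*}
so that on product vectors of the form $\psi\ot\psi^*$ the optimality condition reduces to the scalar identity $\bra{\chi}\Psi_N(\proj{\chi}{\chi})\ket{\chi}=0$ with $\chi=\psi^*$. It is therefore enough to check this identity for the vectors obtained by complex conjugation from the elements of $\mathcal{G}_{\ew{}}$.

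A count gives $|\mathcal{G}_{\ew{}}|=2^N+2\binom{2^N}{2}=2^{2N}$, matching the dimension of the ambient space. Linear independence follows from the standard isomorphism $\psi\ot\psi^*\mapsto\proj{\psi}{\psi}$: the images are the rank-one Hermitian matrices $\proj{e_l}{e_l}$, $\proj{f_{mn}}{f_{mn}}$ and $\proj{g_{mn}}{g_{mn}}$, and the combinations
\begin{equation*}
\proj{f_{mn}}{f_{mn}}-\proj{e_m}{e_m}-\proj{e_n}{e_n}=\proj{e_m}{e_n}+\proj{e_n}{e_m},
\end{equation*}
together with the analogous expression for $g_{mn}$, recover both $\proj{e_m}{e_n}$ and $\proj{e_n}{e_m}$ for $m<n$. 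Adjoining the diagonal projectors $\proj{e_l}{e_l}$ one gets a basis of $\m{2^N}$, which is equivalent to linear independence of $\mathcal{G}_{\ew{}}$ in $\C^{2^N}\ot\C^{2^N}$.

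The main obstacle is verifying the scalar vanishing $\bra{\chi}\Psi_N(\proj{\chi}{\chi})\ket{\chi}=0$ for every $\chi$ relevant to $\mathcal{G}_{\ew{}}$. I would argue by induction on $N$. The base case $N=1$ is immediate since $\Psi_1=\R$ and $\R(\proj{\chi}{\chi})\ket{\chi}=\ket{\chi}-\ket{\chi}=0$ for any unit $\ket{\chi}$. For the inductive step I would plug $\proj{\chi}{\chi}$ into the recurrence~(\ref{eq:Psi_N}) using the $\C^2\ot\C^{2^N}$ block decomposition $\ket{\chi}=\ket{0}\ot\ket{\chi_0}+\ket{1}\ot\ket{\chi_1}$; the diagonal blocks $D_{11},D_{22}$ contribute through $\tr X-\tr X_{ii}=\|\chi_{1-i}\|^2$, while the off-diagonal blocks $A_N,B_N$ contract with the cross term and, after inserting $A_N=X_{12}+\Psi_N(X_{21})$, reduce to an instance of the inductive hypothesis applied to $\ket{\chi_0}$ and $\ket{\chi_1}$. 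The cancellation between these two groups of terms is the crux of the argument, and the vectors in $\mathcal{G}_{\ew{}}$ are exactly those whose block decomposition makes that balance exact.
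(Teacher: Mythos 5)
Your overall route is the same as the paper's: invoke the spanning criterion of Eq.~(\ref{eq:opty}), use the family $\mathcal{G}_{\ew{}}$ of conjugate product vectors, count $2^{2N}$ of them, and check linear independence plus the vanishing of the expectation values. The counting, the reduction $\bra{\psi\ot\phi}\ew{N}\ket{\psi\ot\phi}=\bra{\phi}\Psi_N(\proj{\psi^*}{\psi^*})\ket{\phi}$, and the linear-independence argument via the isomorphism $\psi\ot\psi^*\mapsto\proj{\psi}{\psi}$ are all correct (and more explicit than the paper, which only asserts ``direct calculations show''; note also that the range of indices must be $1\le m<n\le 2^N$, as your count implicitly assumes).

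The gap is in the inductive verification of the vanishing. Your inductive hypothesis is the scalar identity $\bra{\chi}\Psi_N(\proj{\chi}{\chi})\ket{\chi}=0$, but when you expand $\bra{\chi}\Psi_{N+1}(\proj{\chi}{\chi})\ket{\chi}$ through the recursion with $\ket{\chi}=\ket{0}\ot\ket{\chi_0}+\ket{1}\ot\ket{\chi_1}$, the off-diagonal blocks produce cross terms of the form $\bra{\chi_0}\Psi_N(\proj{\chi_1}{\chi_0})\ket{\chi_1}$ with $\chi_0\neq\chi_1$ in general. These are \emph{not} instances of the diagonal hypothesis, and a sesquilinear form can vanish identically on the diagonal without vanishing off it: for $\Psi_1=\R$ one has $\bra{v}\R(\proj{x}{y})\ket{u}=\inner{v}{u}\inner{y}{x}-\inner{v}{x}\inner{y}{u}$, which vanishes at $x=y=u=v$ but not in general. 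So the induction as stated does not close. The statement you actually need is the two-sided annihilation property of Lemma~\ref{lem:prop} in the Appendix, $\Psi_N(\proj{x}{y})\ket{x}=0$ and $\bra{y}\Psi_N(\proj{x}{y})=0$ for \emph{arbitrary} $x,y$, which is proved there by an induction that does close (because the hypothesis is the full vector identity). With that lemma, $\Psi_N(\proj{\chi}{\chi})\ket{\chi}=0$ holds for every $\chi$, so every conjugate product vector $\psi\ot\psi^*$ lies in $P_{\ew{N}}$ --- which also shows that your closing remark is off: the cancellation is not special to $\mathcal{G}_{\ew{}}$; that set merely selects $2^{2N}$ linearly independent representatives.
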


Positive, but not completely positive maps, unlike entanglement witnesses, cannot be directly implemented in the laboratory. One way to tackle this problem is to approximate the positive map by a completely positive one which may serve as a quantum operation. Given a positive map $\Lambda : \B(\H) \rightarrow \B(\H)$ one defines a family of maps
\begin{equation*}\label{}
   \widetilde{\Lambda}(p)= p\,\I+(1-p)\Lambda\ .
\end{equation*}
Let $p_*$ be the smallest $p$ such that $\widetilde{\Lambda}(p_*)$ is completely positive. One calls $\widetilde{\Lambda}(p_*)$ the structural physical approximation (SPA) of $\Lambda$. It was conjectured \cite{Korbicz08-SPA,Augusiak11-SPA} that structural physical approximation to an optimal positive map defines an entanglement breaking map (a completely positive map $\mathcal{E}$ is  entanglement breaking if $(\I \ot \mathcal{E})\rho$ is separable for an arbitrary state $\rho$, see Ref. \cite{Horodecki03-EBC}). In the language of EWs SPA conjecture states that if $\ew{}$ is an optimal EW, then the corresponding SPA
\begin{equation*}\label{}
  \ew{}(p_*) = \frac{p_*}{d_A d_B} \one_A \ot \one_B + (1-p_*)\ew{}\ ,
\end{equation*}
defines a separable state. Recently SPA conjecture has been disproved for indecomposable EWs in \cite{Kye12-SPA} and for decomposable ones in \cite{Chruscinski13-DSC} (see also recent papers \cite{Augusiak13-COE,Wang13-SPA}). Interestingly, the SPA for $\Psi_N$ provides  EB map.
To show this let us recall the following result from Ref. \cite{Pytel11-EW3}
\begin{cor}
If $\Lambda: \m{n} \rightarrow \m{n}$ is a unital map, and the smallest eigenvalue of the corresponding entanglement witness $W$ satisfies $\xi_{min}\leq -\frac{1}{n}$, then the SPA to $W$ defines a separable state.
\end{cor}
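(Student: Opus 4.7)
The plan is to first pin down the critical probability $p_*$ from positive semidefiniteness, and then use the unitality of $\Lambda$ to give $W(p_*)$ an explicit product decomposition. The hypothesis $\xi_{min}\le -1/n$ enters only through the resulting numerical bound on $p_*$; separability itself will come from the structure of unital positive maps.

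Computing $p_*$ is a one-line calculation. Since $W(p) = \frac{p}{n^2}\one_A\otimes\one_B + (1-p)W$ is an affine shift of the spectrum of $W$, its smallest eigenvalue is $\frac{p}{n^2} + (1-p)\xi_{min}$. Setting this to zero gives $p_* = -n^2\xi_{min}/(1 - n^2\xi_{min})$, and the hypothesis $\xi_{min}\le -1/n$ yields $p_* \ge n/(n+1)$, equivalently $1 - p_* \le 1/(n+1)$. This is the coefficient budget to be fed into the separability step.

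For separability, the key observation is that unitality $\Lambda(\one_n) = \one_n$ implies $\{\Lambda(\proj{j}{j})\}_j$ is a POVM, giving the manifestly separable expansion
\begin{equation*}
\frac{1}{n^2}\one_A\otimes\one_B = \frac{1}{n^2}\sum_{i,j}\proj{i}{i}\otimes\Lambda(\proj{j}{j}).
\end{equation*}
The diagonal $i=j$ summands already reproduce the diagonal part of $W$ in the computational basis, and the $i\neq j$ summands provide a positive background against which the indefinite off-diagonal part $\sum_{i\neq j}\proj{i}{j}\otimes\Lambda(\proj{i}{j})$ of $W$ must be absorbed. I would carry out this absorption pair by pair: for each unordered $\{i,j\}$, replace the four terms carrying index set $\{i,j\}$ by an integral of product projectors $\proj{\psi}{\psi}\otimes\Lambda(\proj{\psi}{\psi})$ with $\psi$ ranging over the unit sphere of the span of $e_i$ and $e_j$, and verify that the resulting coefficients stay non-negative precisely when $n^2(1-p_*)/p_* \le n$, which is our inequality.

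The main obstacle is checking this coefficient balance block by block in full generality; one must exploit positivity of $\Lambda$ on the rank-one projectors $\proj{\psi}{\psi}$, not just on the computational basis projectors. A cleaner fallback, should the explicit construction prove unwieldy, is to combine unitality (which forces $\tr W = n$) with the spectral hypothesis to obtain an operator-norm bound on $W(p_*) - \one_A\otimes\one_B/n^2$ and then invoke a Gurvits--Barnum-type separable-ball theorem around the maximally mixed state.
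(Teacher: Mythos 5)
The paper does not actually prove this corollary --- it is quoted as a known result from Ref.~\cite{Pytel11-EW3} --- so your attempt has to be measured against the standard argument there. Your first step is correct and matches that argument: the smallest eigenvalue of $\ew{}(p)$ is $p/n^2+(1-p)\xi_{min}$, so $p_*=-n^2\xi_{min}/(1-n^2\xi_{min})$, and $\xi_{min}\leq-1/n$ gives $1-p_*\leq 1/(n+1)$. The gap is that the separability step --- the entire content of the corollary --- is left as a plan with an acknowledged ``main obstacle,'' and neither of your two routes closes it. The pair-by-pair absorption could in principle be pushed through (it amounts to rediscovering the explicit separable decomposition of the isotropic state at its threshold), but you never verify the coefficient balance, and as written the integrand $\proj{\psi}{\psi}\ot\Lambda(\proj{\psi}{\psi})$ needs $\psi^*$ in the second slot to reproduce the off-diagonal blocks $e_{ij}\ot\Lambda(e_{ij})$. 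The fallback is not merely unwieldy, it fails: the Gurvits--Barnum ball around $\one\ot\one/n^2$ has Frobenius radius $1/\sqrt{n^2(n^2-1)}$, whereas $\|\ew{}(p_*)-\one\ot\one/n^2\|_2=(1-p_*)\|\ew{}-\one\ot\one/n^2\|_2$ and the second factor is not controlled by $\xi_{min}$ alone. Already for $\Lambda$ equal to transposition (a unital positive map with $\ew{}=V/n$ and $\xi_{min}=-1/n$) this distance is of order $1/(n+1)$, far outside the ball, even though the SPA --- a threshold Werner state --- is separable.

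The one-line argument you are circling but never state is a factorization, not a decomposition. By unitality, $(\I\ot\Lambda)(\one\ot\one)=\one\ot\one$, hence
\begin{equation*}
\ew{}(p_*)=(\I\ot\Lambda)\Bigl[\tfrac{p_*}{n^2}\,\one\ot\one+(1-p_*)P^+\Bigr].
\end{equation*}
The operator in brackets is an isotropic state, which is separable precisely when $1-p_*\leq 1/(n+1)$ --- exactly the bound your hypothesis delivers. Since $\I\ot\Lambda$ sends $\sigma_A\ot\sigma_B$ to $\sigma_A\ot\Lambda(\sigma_B)\geq 0$, it maps separable operators to separable operators for \emph{any} positive $\Lambda$, and the conclusion follows. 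Your identity $\tfrac{1}{n^2}\one\ot\one=\tfrac{1}{n^2}\sum_{i,j}\proj{i}{i}\ot\Lambda(\proj{j}{j})$ is the diagonal shadow of this factorization; pushing the whole isotropic state through $\I\ot\Lambda$ at once is what eliminates the block-by-block bookkeeping you correctly identify as the sticking point.
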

Since for any $N\ge1$ an entanglement witness $\ew{N}$ corresponding
to $\Psi_{N}$ posses only one negative eigenvalue $\xi=-\frac{1}{2^N}$,
thus the SPA to $\Psi_{N}$ indeed defines an entanglement breaking
channel.

\section{Conclusions}

We provided a new class of linear positive, but not completely positive, maps in $\m{2}^{\ot N}$. These maps are  indecomposable and optimal, and their structural physical approximation gives rise to an entanglement breaking channel. Equivalently, our construction provide new entanglement witnesses for bi-partite systems where each subsystem lives in the $N$ qubit Hilbert space.


\begin{acknowledgments}
DC was partially supported by the National Science Centre project
DEC-2011/03/B/ST2/00136.
\end{acknowledgments}

%

\section*{Appendix: proof of Theorem 1}
\begin{proof}
We prove the theorem by induction. We already known that it holds for $N=1$ and $N=2$. Now, assuming that it is true for $\Psi_N$ we prove it for $\Psi_{N+1}$. We shall use the fact that $\Psi_N$ is contractive, i.e.
\begin{equation}\label{eq:contr}
  \|\Psi_N(X)\| \leq \|X\|  \ ,
\end{equation}
where $\|X\|$ denotes an operator norm of $X$, i.e., the maximal eigenvalue of $|X|= \sqrt{XX^\dg }$. Recall that any unital map is positive iff it is contractive in the operator norm \cite{Bhatia-PDM}. To show that $\Psi_{N+1}$ defines a positive map  it is enough to show that it maps any rank-1 projector into a positive element. Let us consider $P=\proj{\psi}{\psi}$ with $\psi$ being an arbitrary vector in $\C^{2^{N+1}}$. Since $\C^{2^{N+1}}=\C^{2^{N}}\oplus\C^{2^{N}}$ one can rewrite $\psi=\bigoplus_{i=1}^2\sqrt{\alpha_{i}}\psi_{i}$, with $\psi_{1},\psi_{2}\in\C^{2^{N}}$ and $\alpha_{1}+\alpha_{2}=1$. Without loosing generality one can assume $\inner{\psi_{i}}{\psi_{i}}=1$ and hence
\begin{equation*}
\Psi_{N+1}(P)=\frac{1}{2^{N}}\left(\begin{array}{c|c}
\one_{2^{N}}\alpha_{2} & -\sqrt{\alpha_{1}\alpha_{2}}\,A_{N}\\ \hline
-\sqrt{\alpha_{1}\alpha_{2}}\,A_{N}^{\dg} & \one_{2^N}\alpha_{1}
\end{array}\right)\ ,
\end{equation*}
with $A_{N}=\proj{\psi_{1}}{\psi_{2}}+\Psi_{N}(\proj{\psi_{2}}{\psi_{1}}).$
It is clear that $\Psi_{N+1}(P) \geq 0$  iff
\begin{equation}\label{ineq:AA^dg}
A_{N}A_{N}^{\dg}\le\one_{2^N} .
\end{equation}

\begin{lem}\label{lem:prop}The map $\Psi_N$ satisfies
\begin{equation}\label{eq:prop}
  \Psi_N(\proj{x}{y})\ket{x} = 0\ , \ \ \ \bra{y}\Psi_N(\proj{x}{y}) = 0 \ ,
\end{equation}
for any vectors $\ket{x},\ket{y} \in \C^{2^N}$.
\end{lem}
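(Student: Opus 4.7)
I would prove the lemma by induction on $N$, establishing both identities simultaneously. The base case $N=1$ is immediate from the identification $\Psi_1 = \R$: one computes $\R(\proj{x}{y})\ket{x} = \one_2\inner{y}{x}\ket{x} - \ket{x}\inner{y}{x} = 0$, and the left-hand identity follows by the mirror computation.

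For the inductive step, I decompose $\C^{2^{N+1}} = \C^{2^N}\oplus\C^{2^N}$ and write $\ket{x} = \ket{x_1}\oplus\ket{x_2}$, $\ket{y} = \ket{y_1}\oplus\ket{y_2}$. Then the blocks of $\proj{x}{y}$ are simply $X_{ij} = \proj{x_i}{y_j}$, so $\tr X_{ii} = \inner{y_i}{x_i}$ and the defining recursion (\ref{eq:Psi_N}) produces
\begin{equation*}
\Psi_{N+1}(\proj{x}{y}) = \frac{1}{2^N}\left(\begin{array}{c|c}\one_{2^N}\inner{y_2}{x_2} & -A_N \\ \hline -B_N & \one_{2^N}\inner{y_1}{x_1}\end{array}\right),
\end{equation*}
with $A_N = \proj{x_1}{y_2} + \Psi_N(\proj{x_2}{y_1})$ and $B_N = \proj{x_2}{y_1} + \Psi_N(\proj{x_1}{y_2})$.

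Next I would apply this block matrix to $\ket{x}$ component-wise. The upper block yields $\inner{y_2}{x_2}\ket{x_1} - \proj{x_1}{y_2}\ket{x_2} - \Psi_N(\proj{x_2}{y_1})\ket{x_2}$; the first two terms cancel algebraically since $\proj{x_1}{y_2}\ket{x_2} = \inner{y_2}{x_2}\ket{x_1}$, and the remaining term vanishes by the inductive hypothesis applied with the pair $(\ket{x_2},\ket{y_1})$. The lower block produces the mirror-image cancellation, invoking the hypothesis on $(\ket{x_1},\ket{y_2})$. The left identity $\bra{y}\Psi_{N+1}(\proj{x}{y}) = 0$ follows by repeating the same calculation from the row side, or equivalently by taking the Hermitian adjoint after swapping the roles of $\ket{x}$ and $\ket{y}$.

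There is no substantive obstacle here; the induction is essentially forced once the blocks are written out. The only observation driving the argument is that the ``vacuum'' diagonal blocks $\one_{2^N}\inner{y_i}{x_i}$ produced by the recursion are engineered precisely to annihilate the non-recursive summand $\proj{x_j}{y_i}$ sitting inside the adjacent $A_N$ (or $B_N$) when the operator is evaluated on $\ket{x_j}$. The only care required is bookkeeping: tracking which pair of vectors to feed into the inductive hypothesis in each of the two blocks.
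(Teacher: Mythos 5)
Your proof is correct and follows essentially the same route as the paper's: induction on $N$ via the block decomposition $\C^{2^{N+1}}=\C^{2^N}\oplus\C^{2^N}$, with the diagonal blocks $\one_{2^N}\inner{y_i}{x_i}$ cancelling the non-recursive terms of $A_N$, $B_N$ and the inductive hypothesis killing the $\Psi_N(\cdot)$ terms. The only nitpick is a labelling slip in your base case (you verify the column identity and then say the ``left-hand'' one follows by mirroring, though what you computed is the left-hand identity of Eq.~(\ref{eq:prop})); the mathematics is fine.
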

\begin{proof} We prove this by induction. For $N=1$ one immediately verifies (\ref{eq:prop}). Now, assuming  that (\ref{eq:prop}) holds for $\Psi_N$ we prove it for $\Psi_{N+1}$. Using
\begin{equation*}\label{}
  \ket{x} = \ket{x_1 \oplus x_2} \ , \ \ket{y} = \ket{y_1\oplus y_2} \ ,
\end{equation*}
one finds for $ 2^N \Psi_{N+1}(\proj{x}{y})$:
\begin{equation*}\label{}
  \left( \begin{array}{c|c} \inner{y_2}{x_2} \one_{2^N} & - \proj{x_1}{y_2} - \Psi_N(\proj{x_2}{y_1}) \\ \hline
  - \proj{x_2}{y_1} - \Psi_N(\proj{x_1}{y_2}) & \inner{y_1}{x_1} \one_{2^N} \end{array} \right)  \ ,
\end{equation*}
and hence
\begin{equation*}
  \Psi_{N+1}(\proj{x}{y})\ket{x} \equiv \Psi_{N+1}(\proj{x}{y}) \left( \begin{array}{c} \ket{x_1} \\ \hline \ket{x_2} \end{array} \right) = 0 \
\end{equation*}
where we have used $\Psi_N(\proj{x_2}{y_1}) \ket{x_2} = 0$. Similarly $\bra{y}\Psi_N(\proj{x}{y}) = 0$.
\end{proof}
Now, using the Lemma \ref{lem:prop} one arrives at
\begin{equation*}\label{}
A_N A_N^\dg = \proj{\psi_1}{\psi_1} + Q_N \ ,
\end{equation*}
where $Q_N = \Psi_N(\proj{\psi_2}{\psi_1}) \Psi_N(\proj{\psi_1}{\psi_2})$.
Note that $Q_N$ is supported on the subspace orthogonal to $\ket{\psi_1}$ and hence the set of eigenvalues of $A_N A_N^\dg$ consists of eigenvalues of $Q_N$ and 1. Now, using contractivity (\ref{eq:contr}) one obtains
\begin{equation*}\label{}
\| \Psi_N(\proj{\psi_1}{\psi_2}) \| \leq \|\proj{\psi_1}{\psi_2}\| \leq 1 \ ,
\end{equation*}
which shows that the maximal eigenvalue of $Q_N$ is not greater than 1. This finally proves (\ref{ineq:AA^dg}).
\end{proof}

\end{document}